\newcommand{\agentA}{\ensuremath{\mathrm{A}}}
\newcommand{\agentAp}{\ensuremath{\mathrm{A}'}}
\newcommand{\agentB}{\ensuremath{\mathrm{B}}}
\newcommand{\agentC}{\ensuremath{\mathrm{C}}}
\newcommand{\agentD}{\ensuremath{\mathrm{D}}}
\newcommand{\hold}[1]{#1_\circlearrowright}
\newcommand{\entry}[1]{\prescript{}{\to}{#1}}
\newcommand{\exit}[1]{#1_\to}
\DeclareMathOperator{\pick}{pick}
\DeclareMathOperator{\drop}{drop}
\DeclarePairedDelimiter\floor{\lfloor}{\rfloor}
\title{On deterministic, constant memory triangular searches on the integer lattice}
\author{J. Alfredo Cruz-Carl\'on\inst{1}}
\institute{Universidad Nacional Aut\'onoma de M\'exico\\jucruz@ciencias.unam.mx}
\date{}
\begin{document}
\maketitle
\begin{abstract}
Recently it has been shown that four constant memory, deterministic agents are able to discover the integer lattice if only local, constant-size communication is allowed. Moreover, if the agents' choices are determined with the help of a fair coin, it has been shown that three are necessary and sufficient to discover the integer lattice. In this paper, we show that three deterministic agents cannot find the integer lattice and sketch a possible characterization for \emph{one explorer, three beacons} type of exploration algorithms. 

\end{abstract}

\section{Introduction}
In \cite{feinerman2012collaborative} Feinerman, et al. introduce the \emph{Ants Nearby Treasure Search} (ANTS) problem, which is a generalization of the cow-path problem
\cite{baezayates1993searching}. While this model (and a randomized version) has been studied, for example, Cohen, et al. in \cite{Cohen:2017:EIS:3039686.3039700} recently proved that two randomized agents can not find the treasure in the integer lattice, it remains (to the best of our knowledge) an open problem (\cite{emek2015many}) whether three deterministic agents can find the treasure. In this paper we show that they can not.

Our approach is based on two insights. The first is, communication is key; as we will see, if an agent gets lost, the region of the plane it can explore on its own is limited. The second (insight) is actually a question; if communication is key and we know two agents already met at some cell (for example we assume all agents start inside the same cell) what must happen so they meet again? To answer this question, we will explore the role of the automata, the scheduling policy and the communication protocol to determine the conditions every meeting must have. Following we will use these conditions to characterize the shape of the area of the integer plane one, two and three agents can effectively discover and finally we sketch a possible characterization of \emph{one explorer, three beacons} exploration algorithms. 

The paper is structured as follows: in Section~\ref{sec:model} we define our model, following in Section~\ref{sec:ep} we develop or necessary tools and finally present our main result as well as a consequence. Finally in Section~\ref{conc} we sketch a characterization of \emph{one explorer, three beacons} exploration algorithms.


\section{Model}
\label{sec:model}



Let \agentA{} be an agent, formally it is a 3-tuple $\Pi_\agentA = \langle Q, s_0^\agentA, \delta\rangle$ where $Q$ is the finite \emph{set of states}, $s_0^\agentA$ is its \emph{start state} and $\delta: Q\times 2^Q \rightarrow Q\times H$ is the \emph{transition function} where $H = \{N,S,W,E,P\}$\footnote{Since we focus on deterministic protocols only, the codomain of $\delta$ is $Q\times H$ instead of  $2^{Q\times H}$ adopted in \cite{emek2015many}  }. The letters $N,S,E,W$ correspond to the four cardinal points and $P$ stands for \emph{stay put}, i.e. the agent stays inside the current cell. An agent \agentA{} may communicate with another agent \agentB{} if both are inside the same cell. Moreover this exchange is of constant size, they only sense the other's current state (this communication is modeled by the power set of $Q$ in the $\delta$'s domain). If two agents are inside the same cell we say they are \emph{neighbors}. Throughout the paper we use the letters \agentA{}, \agentB{}, and \agentC{} to denote agents. 


An agent may be in three states, active, paused or halted. An agent is \emph{active} at time $t$ if it performs a $\delta$-transition, is \emph{paused} (at time $t$) when it does not and it is \emph{halted} if it is in the final state. To determine when an agent becomes active (inactive) we assume the existence of a discrete global clock, an \emph{activation function} $t$ for each agent (e.g. $t_\agentA$ for \agentA{}) and a \emph{scheduling policy} $\psi$.
When an agent performs a $\delta$-transition we assume the change of its state and its movement happen at the same time; we denote the time of the $i$-th $\delta$-transition by $t_\agentA(i)$.

We call a \emph{protocol} to a set of agents and their automaton. A \emph{schedule} is an adversary controlled activation and pause actions of the automaton in a protocol. A protocol is \emph{effective} or \emph{schedule resistant} if it can find the treasure for any given schedule.

Because all agents have the same states set $Q$ and the same transition function $\delta$, then, all agents are indistinguishable. Let $Q_a \subseteq Q$ be the set of states that are reachable from $s^0_a$. We say two agents $a,b$ are \emph{disjoint} if $Q_a \cap Q_b = \emptyset$. In this paper we focus on disjoint agents and their interaction. 


A states sequence $L = \langle q_l, \ldots, q_i,q_j,\ldots,q_k\rangle$ is a \emph{chain} if for any two consecutive states in $L$, $q_i,q_j$, $\delta(q_i,a) = (q_j,b)$ for any $a \subset Q, b \in H$.  If $\delta(q_k,a)=(q_l,b)$ we say $L$ is a \emph{cycle} and that $q_l$ and $q_k$ are consecutive. Given two chains $J,L$ we denote by $J \odot L$ the chain resulting from connecting the last state in $J$ ($j_k$) with the first state of $L$ ($l_1$), such connection is made by defining $\delta(j_k,a) = (l_1,b), a \subset Q, b \in H$. We denote the length of $L$ by $|L|$. An agent \emph{follows} $L$ if its automaton starts at $q_l$ and ends at $q_k$. In the case $L$ is a cycle we also say an agent follows $L^k$ meaning it follows $L$ $k$ times.
For two chains $L, J, L \leadsto J$ denotes there is a chain, possibly empty, connecting $L$ and $J$.

Let $C = \langle \ldots, q_k,q_i, q_j, \ldots \rangle$ be a cycle and $q$ a state such that $\delta(q_i,a) = (q,b), q \neq q_j$. 
We call the ordered pair $(a,q_i)$ an \emph{exit condition} of $C$ at $q_i$. 
Analogously we call $(a,q_i)$ an \emph{entry condition} if $\delta(q,a) = (q_i,b), q \neq q_k$. 
For any two consecutive states in $C$ e.g. $q_k,q_i$ such that $\delta(q_k,a) = (q_i,b)$ we call $a$ a \emph{holding condition}.
We denote by $\entry{C}$ to the set of all entry conditions, $\exit{C}$ the set of all exit conditions and $\hold{C}$ the set of all holding conditions.
Intuitively $\entry{C}, \hold{C}$ and $\exit{C}$ describe the conditions on the exploration area for the agent to start, continue or stop following $C$.



\section{Exploration power}
\label{sec:ep}
The goal of this section is to determine the maximal area of the integer plane three agents can effectively discover or in other words their exploration power. Formally the \emph{exploration power} of $k$ agents is the shape of the maximal area of the integer plane for which there exists an effective protocol. It relies on two aspects, how the agents move and interact. We say \agentA{} and \agentB{} \emph{interact} if they meet in an unbounded number of cells. As we will see (Subsection~\ref{lab:OneAnt}) all the agents move in a similar fashion when they are alone; either they continue blindly until they meet another agent or they move a prefixed number of cells and stop waiting for an agent to arrive at the same cell. On the other hand, to characterize the interaction between two agents we must first determine the conditions under which they meet in a cell (Subsection~\ref{lab:twoAntsSec}). Finally, we will characterize the exploration power of three agents (Subsection~\ref{lab:three_ep}). 

\subsection{One ant}
\label{lab:OneAnt}
In the case of one ant, \agentA{}, there are no meetings, hence we focus on the structure of the automata when it explores an unbounded region of the plane. 
Since \agentA{} has a finite number of states, it can only move arbitrary long distances by a cycle $C = \langle c_0,\ldots,c_k\rangle$. 
The overall structure of \agentA{} must be $I \leadsto C \leadsto  h$ where $I$ is a possibly empty chain (i.e. $s_0^\agentA$ belongs to $C$) and $h$ is the halting state. It must be that $\entry{C} = \{(\emptyset,c_j)\}, 0 \leq j \leq k, \exit{C}=\{(\tau, h)\},\hold{C} =\{\emptyset\}$. We call any cycle $C$ such that $\hold{C} =\{\emptyset\}$ an \emph{exploration cycle}.

\begin{figure}
\centering
\subfloat[]{
	\includegraphics[width=0.3\textwidth]{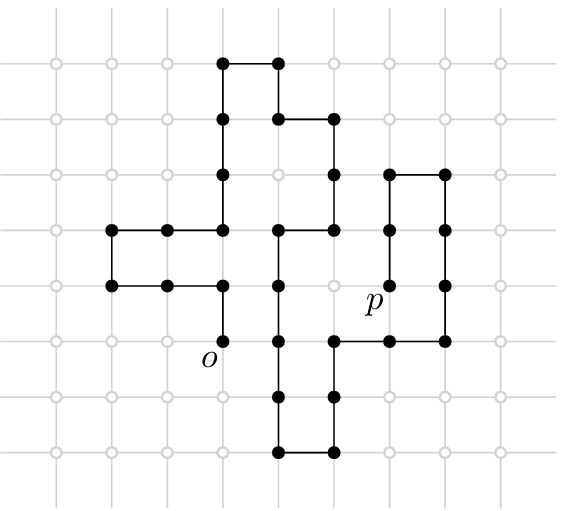}
}
\subfloat[]{
	\includegraphics[width=0.22\textwidth]{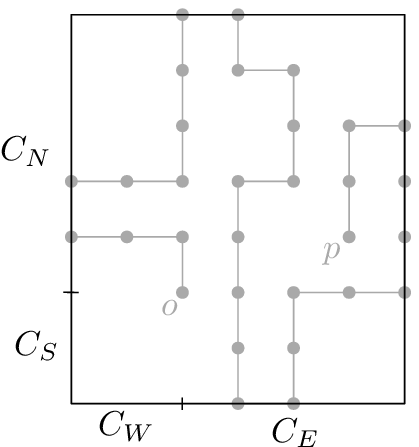}
}
\label{OneAgent1}
\caption{The bounding rectangle (right) of the path from $o$ to $p$ (left).}
\end{figure}

Suppose \agentA{} follows $C$ from a start cell $o$ and when it performs the transition $\delta(c_k, \emptyset) = (c_0, h)$ it stops at cell $p$. There are momments where \agentA{} reaches furthest south, west, north and east. Hence, there exist a rectangle $R_C$ that bounds \agentA{} movements with sides of size $C_W + C_E$ and $C_S + C_N$ where $C_h$ is the maximum number of cells in the heading $h$ from $o$ (see Figure~\ref{OneAgent1}).

Consider an agent \agentAp{ }with overall structure $ I \leadsto C'\leadsto h$ such that it explores the whole $R_C$ area when \agentAp{ }follows $C'$ from $o$ to $p$. One possible structure for $C'$ is to first go to the south-west corner of $R_C$, $S^{C_S}W^{C_W}$, that is move $C_S$ cells to the south and $C_W$ cells to the west; then explore each column of $R_C$, $(N^{C_S+C_N}S^{C_S+C_N}E)^{C_W+C_E}W$. Move from the south-east corner back to $o$, $W^{C_E}N^{C_S}$ and then move from $o$ to $p$, $H^{O_H}H'^{O_{H'}}$; where $H$ and $H'$ are the headings to get from $o$ to $p$ and $O_H, O_{H'}$ their respective number of cells.  Putting all together, a possible movement pattern for $C'$ is: $S^{C_S}W^{C_W}(N^{C_S+C_N}S^{C_S+C_N}E)^{C_W+C_E}W^{C_E+1}N^{C_S}H^{O_H}H'^{O_{H'}}$. To ensure the agent halts when it finds the treasure, each state in $C'$ has a transition to $h$ labeled with $\tau$. We have the following lemma:

\begin{lemma}
\agentA{ }movements are bounded by \agentAp{}.
\end{lemma}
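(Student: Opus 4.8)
The plan is to read "bounded by" as the statement that every cell visited by \agentA{} is also visited by \agentAp{}, so that the region \agentAp{} explores contains the region \agentA{} explores. The starting point is the structural fact that, because $\hold{C} = \{\emptyset\}$ and \agentA{} is alone, \agentA{} never senses anything nontrivial and so deterministically repeats $C$ forever until it senses $\tau$; its whole trajectory is therefore $C$ iterated, $C^n$. First I would record that one full traversal of $C$ translates the agent by the fixed vector $d := p - o$, so on its $n$-th loop ($n \geq 0$) the agent starts at $o + nd$ and reproduces exactly the same relative moves. By the defining property of $R_C$ — within one loop \agentA{} reaches at most $C_W, C_E, C_S, C_N$ cells west, east, south and north of its loop-start — every cell \agentA{} touches during loop $n$ lies in the translated rectangle $R_C + nd$.

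Next I would use the construction of $C'$. From the excerpt, one traversal of $C'$ starting at $o$ both covers all of $R_C$ and ends at $p$; the latter gives $C'$ the same per-loop displacement $d = p - o$ as $C$. If desired this can be checked directly by summing the headings in $S^{C_S}W^{C_W}(N^{C_S+C_N}S^{C_S+C_N}E)^{C_W+C_E}W^{C_E+1}N^{C_S}H^{O_H}H'^{O_{H'}}$: the descent to the south-west corner together with the serpentine sweep return the agent to $o$ (vertical moves cancel, eastward and westward moves cancel), after which $H^{O_H}H'^{O_{H'}}$ carries it from $o$ to $p$. Since \agentAp{} also iterates $C'$, on its $n$-th loop it starts at $o + nd$ and visits every cell of $R_C + nd$.

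Combining the two observations, for each $n$ the cells \agentA{} visits on loop $n$ are contained in $R_C + nd$, which \agentAp{} visits in full on its own loop $n$; as both agents share the initial chain $I$ and start at $o$, taking the union over $n$ yields that the set of cells visited by \agentA{} is a subset of the set visited by \agentAp{}. The case where \agentA{} halts partway through a loop upon sensing $\tau$ is harmless: the treasure cell it reaches still lies in some $R_C + nd$ and is therefore also reached by \agentAp{}, so the containment is preserved.

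The step I expect to be the main obstacle is the bookkeeping behind the second paragraph — verifying that the serpentine sweep has exactly zero net displacement and genuinely covers every column of $R_C$ — since this hinges on the boundary/off-by-one details of the movement pattern (for instance the $W^{C_E+1}$ versus $W^{C_E}$ return leg and whether the last column is swept). Once the single-loop coverage and the equality of per-loop displacements are pinned down, the remaining translation-and-union argument is a direct consequence of $\hold{C} = \{\emptyset\}$ and the definition of $R_C$.
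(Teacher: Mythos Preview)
The paper gives no separate proof of this lemma: it is stated as an immediate consequence of the construction of $C'$ in the paragraph preceding it. Your proposal is exactly the natural unpacking of that construction --- per-loop containment in the translated rectangle $R_C+nd$, equal per-loop displacement $d=p-o$ for $C$ and $C'$, then union over $n$ --- and is correct; the off-by-one bookkeeping you flag (e.g.\ the $W^{C_E+1}$ return leg and whether the last column is swept) is indeed where the paper's own formula is loose, but it does not affect the argument.
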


\begin{figure}[h]
	\centering
	\subfloat[A half-band, the agent explores a black rectangle and then a gray rectangle.]{
		\includegraphics[width=0.33\textwidth]{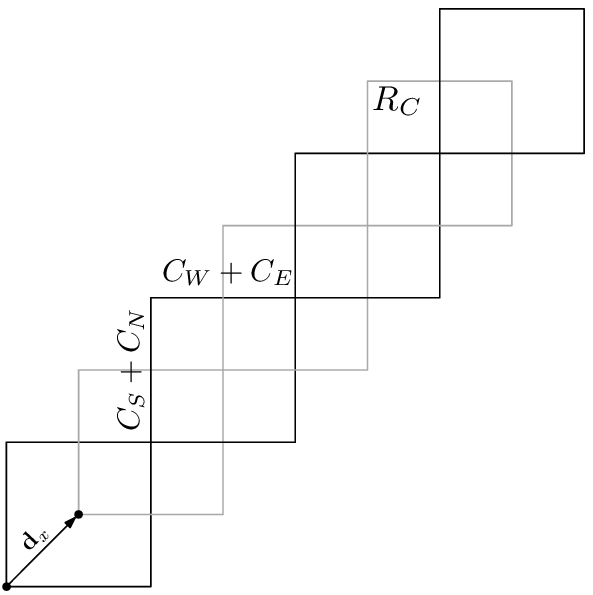}
	}
	\,\,\,\,\,\,
	\subfloat[The exploration power of three agents in the absence of communication]{
		\includegraphics[width=0.4\textwidth]{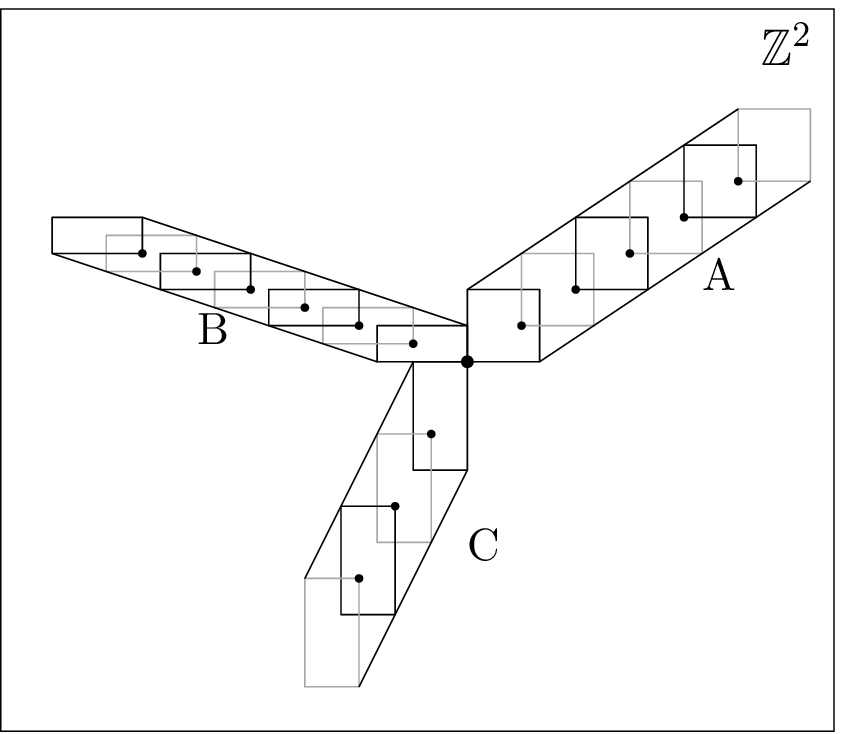}
	}
	\caption{A half-band and illustration of Corollary~\ref{3exp_pow_1}}
	\label{fig:half-band}
\end{figure}

We call the pattern described by \agentAp{ }(basically a set of overlapping rectangles, see Figure~\ref{fig:half-band}) a \emph{half-band} with \emph{width} $C_W+C_E$, \emph{height} $C_S+C_N$ and \emph{continuation} $O_H+ O_{H'}$. If $O_H = O_{H'} = 0$ we call the cycle $C$ \emph{stationary} (the agent explores the same rectangle over and over) we call it \emph{no-stationary} otherwise. To ease our presentation, we define the direction of a half-band as the difference vector between the coordinates of the end cell $p$ minus the coordinates of the start cell $o$ and denote it by $\vec{d}_x$ where $x$ is an agent.

\begin{corollary}
\label{exp_pow_1ant}
The exploration power of one ant is a half-band.
\end{corollary}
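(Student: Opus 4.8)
The plan is to show that the exploration power of a single agent is exactly a half-band, which requires establishing both containment directions: that any single-agent protocol explores at most a half-band, and that the half-band region is in fact realizable by some effective protocol.

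First I would argue the upper bound. By the structural observation preceding the statement, the overall automaton of a lone agent \agentA{} must take the form $I \leadsto C \leadsto h$, where $I$ is a (possibly empty) initializing chain, $C$ is an exploration cycle with $\hold{C}=\{\emptyset\}$, and $h$ is the halting state. The key point is that with no other agents present, \agentA{} receives no communication, so the only input that can ever appear is $\emptyset$ (the treasure signal $\tau$ merely sends it to $h$ and terminates exploration). Hence once \agentA{} enters $C$ it is forced to traverse $C$ repeatedly, deterministically, forever. The initial chain $I$ touches only finitely many cells, so asymptotically the explored region is determined entirely by $C$. Because $C$ is a finite cycle, one full traversal of $C$ produces a bounded displacement, namely the direction vector $\vec{d}_\agentA$, and is confined to the bounding rectangle $R_C$ of width $C_W+C_E$ and height $C_S+C_N$ established above. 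Iterating $C$ therefore tiles the plane by translated copies of $R_C$ along $\vec{d}_\agentA$; the union of these translates, together with the finite contribution of $I$, is precisely a half-band of the stated width, height, and continuation. This shows the explored region is contained in a half-band.

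Next I would argue the lower bound, i.e. that an arbitrary half-band is achievable. This is exactly the content of the construction of \agentAp{} given above: for any prescribed rectangle dimensions and continuation one can write down an explicit movement pattern $S^{C_S}W^{C_W}(N^{C_S+C_N}S^{C_S+C_N}E)^{C_W+C_E}W^{C_E+1}N^{C_S}H^{O_H}H'^{O_{H'}}$ whose cycle $C'$ sweeps every cell of $R_C$ and then advances by the continuation, with each state equipped with a $\tau$-transition to $h$ so that the protocol halts upon finding the treasure. Since this pattern visits every cell of the half-band, the protocol is effective on that region, witnessing that the half-band is attainable. Combining the two directions yields that the maximal effectively explorable region is exactly a half-band.

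The main obstacle will be making the upper-bound argument fully rigorous rather than merely plausible: one must argue that the absence of meetings genuinely forces the input alphabet seen by \agentA{} to collapse to $\{\emptyset,\tau\}$, and that any deterministic single agent whose exploration is unbounded must, after the finite prefix $I$, settle into a single repeated cycle $C$ (as opposed to some more complex non-eventually-periodic behavior). The finiteness of $Q$ is what rules this out, since a deterministic automaton driven by a constant input must eventually cycle; I would invoke the structural normal form $I \leadsto C \leadsto h$ already asserted in the text to close this gap. The geometric step — that iterating a fixed-displacement cycle produces a half-band — is then routine and follows directly from the definition of $R_C$ and the half-band.
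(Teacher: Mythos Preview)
Your proposal is correct and follows essentially the same approach as the paper: the corollary is stated without a separate proof precisely because the two directions you identify---the upper bound via the structural form $I \leadsto C \leadsto h$ together with the bounding-by-\agentAp{} lemma, and the lower bound via the explicit sweeping construction of $C'$---are already laid out in the text immediately preceding it. You have simply made the two inclusions explicit where the paper leaves them implicit in calling this a corollary.
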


We can also see that when an agent explores by means of an exploration cycle, then it will explore at most a half-band.

\begin{corollary}
\label{1exp_pow}
If \agentA{ }follows a cycle $C$ such that $\hold{C}= \{\emptyset\}$, then it explores at most a half-band.
\end{corollary}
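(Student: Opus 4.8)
The plan is to reduce the statement to the single-agent analysis of the preceding subsection by showing that, while \agentA{} follows $C$, its sensed input is forced to be $\emptyset$ at every step, so its motion coincides exactly with that of a lone ant executing $C$. First I would argue that the hypothesis $\hold{C} = \{\emptyset\}$ pins down the continuation behavior: if at some cycle state $q_i$ the agent senses a set $a$ and the resulting transition $\delta(q_i,a)$ leads to the next cycle state, then by definition $a$ is a holding condition, hence $a \in \hold{C} = \{\emptyset\}$ and therefore $a = \emptyset$. Consequently any nonempty sensed set at a cycle state cannot keep \agentA{} on $C$; it either triggers an exit condition in $\exit{C}$ or is otherwise undefined on the cycle. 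In either case the agent leaves $C$, so for the entire duration over which \agentA{} is following $C$ it must sense $\emptyset$ at each transition.

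Second, I would invoke the bounding-rectangle construction from the one-ant discussion. Because the input at every holding step is $\emptyset$, the sequence of headings emitted by one full traversal of $C$ depends on nothing but $C$ itself, exactly as in the lone-agent case. Hence the cells visited during one traversal are confined to the rectangle $R_C$ with sides $C_W + C_E$ and $C_S + C_N$, and each complete traversal displaces \agentA{} by the fixed direction vector $\vec{d}_\agentA$ determined by the start and end cells $o$ and $p$ of $C$. Repeating the traversal therefore produces a sequence of copies of $R_C$, each shifted from the previous by $\vec{d}_\agentA$; this is precisely the overlapping-rectangle pattern that Corollary~\ref{exp_pow_1ant} identifies as a half-band (a bounded rectangle in the degenerate stationary case $\vec{d}_\agentA = \vec{0}$).

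Finally I would conclude by set inclusion: every cell \agentA{} explores while following $C$ lies in some copy of $R_C$ of this family, so the explored region is contained in the half-band swept out by the union of these rectangles. Since \agentA{} may exit $C$ (or reach the halting state) before completing arbitrarily many traversals, the explored region can be a proper subset of the full half-band, which is exactly the content of the ``at most'' in the statement. I expect the main obstacle to be the first step: carefully justifying that sensing any nonempty set necessarily removes \agentA{} from $C$, which is where the hypothesis $\hold{C} = \{\emptyset\}$ is essential and where the interaction with other agents, suppressed in the pure one-ant setting, must be explicitly ruled out.
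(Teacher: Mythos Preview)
Your proposal is correct and matches the paper's approach. The paper does not give an explicit proof of this corollary; it is stated immediately after the remark ``We can also see that when an agent explores by means of an exploration cycle, then it will explore at most a half-band,'' so the intended argument is precisely the reduction to the one-ant bounding-rectangle analysis you carry out, and your first step (that $\hold{C}=\{\emptyset\}$ forces the sensed input to be $\emptyset$ at every holding transition, so the motion along $C$ is identical to that of a lone ant) is exactly the content the paper leaves implicit.
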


In the remaining of the paper, we assume that all agents who follow an exploration cycle will explore half-bands.

\begin{corollary}
\label{3exp_pow_1}
In the absence of communication the exploration power of three agents are three independent half-bands.
\end{corollary}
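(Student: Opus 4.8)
The plan is to reduce the three-agent setting to three independent instances of the single-agent analysis, the crucial point being that without communication no agent's trajectory can depend on any other. First I would make precise what ``absence of communication'' means for the transition function: since \agentA{}, \agentB{} and \agentC{} are disjoint and never act on a sensed neighbour, every $\delta$-transition each of them performs is read with the empty set, i.e.\ each agent only ever fires transitions labelled $\emptyset$. Consequently the trajectory of each agent is a function of its own automaton alone and is completely decoupled from the other two; this is what licenses the word \emph{independent} in the statement.

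Next I would argue that a single such agent explores at most a half-band. Because each agent has finitely many states, to visit cells at unbounded distance it must repeat states, hence traverse some cycle $C$ arbitrarily many times. For the agent to keep following $C$ while reading only $\emptyset$, every holding condition of $C$ must be satisfiable by $\emptyset$; that is $\hold{C} = \{\emptyset\}$ and $C$ is an exploration cycle. Were some holding condition non-empty, continuing along $C$ would require sensing a neighbour that, absent communication, never arrives, so an adversarial schedule would stall the agent inside a bounded region. Applying Corollary~\ref{1exp_pow} to this exploration cycle, the agent explores at most a half-band, and the structural description $I \leadsto C \leadsto h$ together with Corollary~\ref{exp_pow_1ant} identifies that half-band precisely.

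Finally I would assemble the three bounds. Since the three trajectories are decoupled, the region discovered by the protocol is exactly the union of the three individual regions, each of which is (at most) a half-band whose width, height, continuation and direction $\vec{d}_x$ depend only on that agent's own cycle. For achievability, running three independent copies of the agent \agentAp{} from Corollary~\ref{exp_pow_1ant} realises three arbitrarily placed, arbitrarily oriented half-bands, so the characterisation is tight. I expect the main obstacle to be the formalisation of the stalling argument above: one must rule out that an agent cleverly exploits waiting states (non-empty holding conditions) to enlarge its reachable area when no neighbour ever materialises, which amounts to showing that every holding condition actually used along an unbounded trajectory must in fact be $\emptyset$, so that Corollary~\ref{1exp_pow} genuinely applies to each of the three agents.
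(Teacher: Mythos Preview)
Your proposal is correct and matches the paper's intent: the paper states Corollary~\ref{3exp_pow_1} without proof, treating it as immediate from the single-agent analysis (Corollary~\ref{exp_pow_1ant} and Corollary~\ref{1exp_pow}), and your argument is precisely the natural unpacking of that inference---decouple the three agents via the empty input, apply the one-agent bound to each, and take the union. The only remark is that your care about $\hold{C}=\{\emptyset\}$ is slightly finer than what is strictly needed: in the absence of communication the agent only ever reads $\emptyset$, so it suffices that $\emptyset$ be a holding condition along the cycle actually traversed, which already puts you in the situation of Corollary~\ref{1exp_pow}; the ``stalling'' worry you flag does not arise.
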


\subsection{Two ants}
\label{lab:twoAntsSec}
In this subsection we determine the exploration power for two agents. To do so, we first introduce the concepts of waiting and meeting from the automaton's perspective. Following we characterize the conditions and properties of the different types of meetings to finally determine their exploration power.




An agent may be in one of three different behaviors, \agentA{} may be waiting for another agent inside a cell, moving and expecting to meet another agent or moving oblivious to any other agent. 

Formally \agentA{} \emph{waits for} \agentB{} if \agentA{} is inside a cycle $\omega_{\agentB}$ such that: $\hold{{\omega_\agentB}} = \Sigma \setminus S$ for some set $S \subset 2^{Q_\agentB}$ such that for all $s \in S$ exists $q \in Q_\agentA$ such that $(s, q) \in \exit{{\omega_\agentB}}$. We say \agentA{} \emph{expects a meeting with} \agentB{} \emph{at state} $q$ if if there is a transition from $q$ to another state with a label $l \in 2^{Q_\agentB}, l \neq \emptyset$. Finally \agentA{} \emph{moves obliviously} if \agentA{} follows a chain $\Theta_\agentA$ such that if $p$ and $q$ are adjacent states, then $\delta(p,s) = (q,a)$ for all $s \in 2^Q$. By the definition of an agent moving obliviously, it follows that if $\Theta_\agentA$ is a cycle, the agent will explore a half-band and only stop if it finds the treasure. Hence we will assume $\Theta_\agentA$ is not a cycle.

\emph{Arranged} meetings are when \agentA{ }and \agentB{ }move obliviously a number of cells and then both enter in a waiting state to wait for the other's arrival.
\emph{Discovered} meetings are when \agentA{ }and \agentB{ }follow no-stationary cycles and they meet at a cell \emph{by chance}.
Finally, a \emph{scheduled} meeting is when \agentA{ }moves following a no-stationary cycle while \agentB{ }moves obliviously a number of cells and then waits for \agentA{}. 

Arranged meetings occurrence depends solely on the design of the protocol. Agents \agentA{ }and \agentB{ }explore a prefix set of cells before meeting and both will enter in a waiting state at the meeting cell, preventing one agent overtaking the other. Arranged meetings require knowledge of the agents relative position to each other when the protocol is being designed; if \agentA{} and \agentB{} only move in arranged meetings, 
both agents basically travel along (possibly) different paths to the next meeting cell.

Discovered meetings occurrence depends mostly on the scheduler. If the half-bands are of equal size and \emph{in opposite directions} then both agents will meet independently of the schedule, however the schedule determines the exact position of the meeting. For example, if \agentA{ }is at cell $(i,j)$ and only moves to the east and \agentB{ }is at cell $(l,j)$ with $i < l$ and only moves to the west, it is schedule dependent where exactly in the $[i,l]$ interval the agents will meet.
On the other hand if the agents are not in opposite directions, then there exists a schedule that can make the agents fail their meeting, its enough for a schedule to stop an agent until the other crosses the meeting area (in the case the half-bands only intersect in a finite area) or always keeping one agent before the other in the case both agents travel in the same direction.


Scheduled meetings have the best of arranged and discovered meetings, they have the \emph{fixed} meeting points of arranged meetings and the unbounded exploration power of discovered meetings, moreover, they only rely on careful protocol design to be schedule resistant. We call the agent that follows the no-stationary cycle the \emph{unbounded} agent and the agent that moves a prearranged set of cells and then waits for the other, the \emph{bounded} agent. We now explore the structure of a scheduled meeting.

In the remaining of this subsection, let \agentA{ }be the unbounded agent following an exploration cycle $\sigma$, and \agentB{ }the bounded one, $\mathbb{P}$ the scheduled meeting protocol from cell $c_1$ to cell $c_k$, $c_1 \neq c_k$ and let $T_\agentB = (c_1,\ldots,c_k)$ be the trajectory of cells \agentB{ }explores while executing $\mathbb{P}$. 

A design requirement for the unbounded ant is that it must expect to meet the bounded ant at any state of its exploration cycle or it might \emph{over take it}. 
Without loss of generality and for simplicity,  we will assume that, under $\mathbb{P}$, if \agentA{ }and \agentB{ }meet at any cell (including the start cell), \agentA{ }will always wait for \agentB{ }to exit the cell.

\begin{lemma}
\label{last-known-pos}
Let \agentA{ }and \agentB{ }be inside the same cell $c_i$, $i < k$. When \agentA{ }resumes its execution of $\mathbb{P}$ then $c_{i+1}$ will be visited by \agentA{ }at least once.

\end{lemma}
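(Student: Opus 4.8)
The plan is to read the statement as a \emph{necessary condition} that schedule resistance forces on the unbounded agent's cycle $\sigma$, and to establish it by confronting an assumed-effective $\mathbb{P}$ with an adversarial schedule. First I would fix the configuration that the waiting convention guarantees: by the assumption that \agentA{} always waits for \agentB{} to exit a shared cell, at the instant the two agents occupy $c_i$ the agent \agentA{} is looping in its waiting cycle $\omega_\agentB$ and does not move, while \agentB{} is free to leave. Since consecutive entries of $T_\agentB$ are adjacent lattice cells, the single step \agentB{} takes on leaving $c_i$ deposits it in $c_{i+1}$, and only then does \agentA{} stop sensing \agentB{}, exit $\omega_\agentB$, and resume $\sigma$. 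With this in hand I would hand the scheduler the play that holds \agentB{} at $c_{i+1}$ for as long as needed while activating \agentA{}, which is legal because pauses are adversary-controlled.

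The core of the argument is a proof by contradiction: suppose that after resuming, \agentA{} never enters $c_{i+1}$. While \agentB{} sits at $c_{i+1}$ the two agents never share a cell, so from this point \agentA{} runs the exploration cycle $\sigma$ with every communication returning $\emptyset$; by Corollary~\ref{1exp_pow} its remaining reach is then confined to a single \emph{half-band}, and in particular, because $\sigma$ carries \agentA{} in its fixed travel direction $\vec{d}_\agentA$, after sufficiently many steps \agentA{} is arbitrarily far ahead of the held \agentB{} while having skipped $c_{i+1}$ entirely. At that moment I would release \agentB{}: the agents are now in the wrong relative order, \agentA{} having \emph{overtaken} \agentB{}, and since \agentA{} only advances along its half-band the two can no longer meet. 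The combined explored region is then \agentA{}'s half-band together with the finite prefix $(c_1,\dots,c_{i+1})$ of $T_\agentB$, a proper subregion of the lattice, so a treasure placed outside it defeats $\mathbb{P}$ under this schedule, contradicting its effectiveness. The design requirement that \agentA{} expect a meeting at every state of $\sigma$ is exactly what makes this bite: it guarantees that a genuine visit to $c_{i+1}$ would have registered as a meeting, so \enquote{\agentA{} skips $c_{i+1}$} is a real miss rather than an unnoticed coincidence.

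The step I expect to be the main obstacle is the clean justification that overtaking the held \agentB{} is \emph{irreparable} — that once \agentA{} is ahead and committed to $\sigma$, no later release of \agentB{} can restore a meeting. For this I would lean on the half-band bound of Corollary~\ref{1exp_pow} to argue that \agentA{}'s forward-only reach and \agentB{}'s bounded trailing motion cannot intersect again, so that the lost meeting genuinely removes coverage of the plane and hence violates schedule resistance. A secondary point to check is that no substitute meeting can be manufactured at a cell other than $c_{i+1}$ before \agentA{} escapes ahead; this is immediate under the chosen schedule, since \agentB{} is immobilized precisely at $c_{i+1}$, leaving that cell as the unique candidate meeting site.
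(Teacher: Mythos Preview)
Your approach is essentially the paper's: assume \agentA{} never visits $c_{i+1}$, have the adversary park \agentB{} there while \agentA{} runs ahead, and conclude that the meeting cannot be recovered. The paper's version is terser---it simply holds \agentB{} at $c_{i+1}$ until \agentA{} has visited and left $c_k$ for the last time and declares the meeting disrupted---so your half-band and treasure-placement detour is more machinery than the lemma needs, but the adversarial schedule and the contradiction are the same.
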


\begin{proof}
If \agentA{ }does not visit $c_{i+1}$ at some point under $\mathbb{P}$, then an adversary making \agentB{ }wait long enough at $c_{i+1}$ for \agentA{ }to visit and leave $c_k$ for the last time and then resuming \agentB{ }execution, will succeed in disrupting the meeting. \qed
\end{proof}

\begin{corollary}
\label{ta_tbCoro}
Let $T_\agentA$ ($T_\agentB$) be the cells that \agentA{} (\agentB{}) visits while executing $\mathbb{P}$. Then $T_\agentB \subseteq T_\agentA$. Moreover for each $c_i \in T_\agentB$ the last time \agentA{ }visits $c_i$ is after \agentB{ }visited it for the last time.
\end{corollary}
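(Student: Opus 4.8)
The plan is to derive both assertions from Lemma~\ref{last-known-pos} together with the standing assumption that, at every meeting, \agentA{} waits for \agentB{} to leave the cell before resuming $\mathbb{P}$, and the fact that both agents share the start cell $c_1$.

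For the inclusion $T_\agentB \subseteq T_\agentA$, I would argue by induction on $i$ that \agentA{} and \agentB{} meet at $c_i$ and, in particular, that \agentA{} visits $c_i$. The base case is immediate: both agents begin in the common cell $c_1$, so they meet there. For the inductive step, assume they meet at $c_i$ with $i < k$. By the waiting assumption, \agentA{} stays in $c_i$ until \agentB{} departs toward $c_{i+1}$; once \agentB{} leaves, \agentA{} resumes $\mathbb{P}$, and Lemma~\ref{last-known-pos} guarantees that \agentA{} visits $c_{i+1}$ at least once thereafter. To turn this visit into a meeting so that the induction can continue, I would invoke the schedule that holds \agentB{} at $c_{i+1}$ until \agentA{} arrives; since $\mathbb{P}$ is effective it must behave correctly under this schedule, so \agentA{} does reach $c_{i+1}$ while \agentB{} is still present, producing the next meeting. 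Iterating up to $i = k$ shows that \agentA{} visits every cell of $T_\agentB$, i.e. $T_\agentB \subseteq T_\agentA$.

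For the timing statement I would run the contrapositive of the argument behind Lemma~\ref{last-known-pos}. Fix $c_i \in T_\agentB$ and suppose, toward a contradiction, that \agentA{}'s last visit to $c_i$ occurs strictly before \agentB{}'s last visit to $c_i$. Consider the adversary that leaves the execution unchanged until \agentB{} begins its final visit to $c_i$ and then holds \agentB{} there indefinitely. From that moment \agentB{} never advances past $c_i$, so the only cell it ever occupies afterward is $c_i$ itself; meanwhile, by hypothesis \agentA{} never returns to $c_i$. Hence the two agents can never meet after the hold, and since \agentB{} also never reaches the intended meeting cell $c_k$ (or, when $i=k$, sits at $c_k$ without \agentA{}), the scheduled meeting is disrupted, contradicting the effectiveness of $\mathbb{P}$. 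Therefore \agentA{}'s last visit to $c_i$ must occur after \agentB{}'s last visit, as claimed.

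The main obstacle I anticipate is the inductive bookkeeping in the first part: Lemma~\ref{last-known-pos} only asserts that \agentA{} visits $c_{i+1}$, not that both agents are simultaneously present there, so the induction genuinely needs the hold-and-wait schedule to manufacture each successive meeting, and one must check that this schedule is consistent with \agentA{}'s deterministic automaton (its trajectory between meetings is fixed, and the adversary only controls when \agentB{} is released). A secondary subtlety is that $T_\agentB$ need not consist of distinct cells; when a cell is revisited, the phrase \emph{the last time} in both the statement and the argument must be read as the final occurrence, and the hold-and-wait adversary should be applied to that final occurrence.
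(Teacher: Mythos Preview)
Your argument is correct and follows essentially the same route as the paper: the inclusion is obtained by applying Lemma~\ref{last-known-pos} cell by cell so that \agentA{} ``follows'' \agentB{} along $T_\agentB$, and the timing claim is the same adversarial contradiction (pause \agentB{} at $c_i$ after \agentA{}'s last visit there and the scheduled meeting at $c_k$ fails). The paper's proof is considerably terser, but your explicit induction and the remarks on repeated cells just spell out what the paper compresses into one line.
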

\begin{proof}
By Lemma~\ref{last-known-pos} \agentA{ }will \emph{follow} \agentB{ }. Since any schedule does not change $\mathbb{P}$, then,  $T_\agentB \subseteq T_\agentA$.
Suppose there exists a $c_i$ such that \agentA{ }visits, for the last time, before \agentB{ }visits it for the last time. Hence any adversary that pauses $B$ at $c_i$ long enough for \agentA{ }to reach and leave $c_k$ for the last time will succeed in preventing the meeting. \qed
\end{proof}

\begin{lemma}
\label{2antsSchedMeet}
Any protocol for only two agents \agentA{ }and \agentB{ }that uses scheduled meetings is bounded by a protocol using only arranged meetings.
\end{lemma}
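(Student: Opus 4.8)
The plan is to show that the region the pair \agentA{}, \agentB{} is \emph{guaranteed} to explore under every schedule is bounded, and then to exhibit an arranged-meeting protocol that explores a region at least as large. By Corollary~\ref{ta_tbCoro} we already know that \agentA{ }follows \agentB{}: $T_\agentB \subseteq T_\agentA$ and, for every cell $c_i \in T_\agentB$, \agentA{ }visits $c_i$ after \agentB{}'s last visit to $c_i$, \emph{for every schedule}. The first task is to turn this ``following'' relation into a structural constraint on \agentA{}'s automaton, namely that \agentA{ }must \emph{wait} for \agentB{ }at the meeting cells rather than rely on the drift of its exploration cycle $\sigma$.

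First I would exploit the no-stationarity of $\sigma$. Since $\sigma$ is no-stationary its direction vector $\vec{d}_\agentA$ is nonzero, so one traversal of $\sigma$ displaces \agentA{ }by $\vec{d}_\agentA$; consequently \agentA{ }visits any fixed cell $c_i$ only finitely many times before its half-band drifts away from $c_i$ forever. On the other hand, Corollary~\ref{ta_tbCoro} requires \agentA{ }to be at $c_i$ after \agentB{}'s last visit there, and the adversary may postpone that last visit arbitrarily (this is exactly the delaying argument of Lemma~\ref{last-known-pos}). These two facts are incompatible unless \agentA{ }stops drifting and instead enters a waiting cycle $\omega_\agentB$ at (or within a bounded neighbourhood of) each $c_i$: a no-stationary cycle alone cannot guarantee a meeting at arbitrarily late times, so schedule resistance forces \agentA{ }to wait. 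Under the adversary that keeps \agentB{ }just ahead of \agentA{ }along $T_\agentB$, the agent \agentA{ }is therefore confined to the union of the finite paths it takes between consecutive waits, a bounded region $\mathcal{R} \supseteq T_\agentB$; no cell outside $\mathcal{R}$ is visited under this schedule, hence none is part of the guaranteed exploration.

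Once \agentA{ }is pinned to the pattern ``advance a finite, oblivious stretch, wait for \agentB{}, advance again,'' its behaviour is structurally an arranged meeting. To finish, I would build the arranged-meeting protocol $\mathbb{P}'$ in which \agentA{ }moves obliviously along the (finite) concatenation of its inter-wait paths while \agentB{ }moves obliviously along $T_\agentB$; both enter a waiting cycle at each common cell $c_i$, and in particular wait at $c_k$ as under $\mathbb{P}$. Protocol $\mathbb{P}'$ uses only arranged meetings and explores exactly $\mathcal{R}$, which contains the guaranteed-exploration region of $\mathbb{P}$, so $\mathbb{P}$ is bounded by $\mathbb{P}'$. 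The main obstacle I anticipate is the second paragraph: the design freedom that \agentA{ }may \emph{expect a meeting at every state} of $\sigma$ means the meeting cell need not be a single fixed $c_i$, so I must argue that the adversary can force the wait-punctuated, bounded pattern no matter which of \agentA{}'s states is used to absorb the encounter, i.e.\ that nonzero drift can never be parlayed into guaranteed unbounded coverage when only \agentB{ }is available to be shadowed.
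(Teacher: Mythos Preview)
Your overall destination is the paper's, but the route you sketch diverges from it and the step you flag as ``the main obstacle'' is in fact a genuine gap rather than a detail.

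Your second paragraph argues that the drift of $\sigma$ together with the adversary's ability to delay \agentB{} forces \agentA{} to enter a waiting cycle $\omega_\agentB$ at each $c_i$. This does not follow in the scheduled-meeting model: the unbounded agent \agentA{} is allowed to \emph{expect a meeting at every state} of $\sigma$ and to pause only when it actually encounters \agentB{} (this is exactly the standing assumption preceding Lemma~\ref{last-known-pos}). With that conditional waiting, \agentA{}'s drift is interrupted whenever it bumps into \agentB{}, so the ``drift away vs.\ arbitrarily late last visit'' incompatibility you invoke never materialises; \agentA{} need not carry an unconditional $\omega_\agentB$ at any cell. Consequently the structural reduction ``advance, wait, advance'' you extract is not forced by your argument, and the bounded region $\mathcal{R}$ is not yet established.

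The paper closes this gap by a different, geometric observation that you are missing: starting from a common cell $c_0$, the very next cell $c_1$ that \agentB{} moves to must lie inside the \emph{first} rectangle $R_\sigma$ of \agentA{}'s half-band (otherwise the half-band drifts away from $c_1$ and the two never meet). Combined with Corollary~\ref{ta_tbCoro}, this pins each step of \agentB{} inside one iteration of $\sigma$, so reaching $c_k$ costs \agentA{} at most a fixed number $l$ of iterations. The paper then \emph{unrolls} the cycle into the finite chain $\sigma^l$ followed by $\omega_\agentB$, obtaining an agent \agentAp{} with overall structure $I\leadsto \sigma^l\omega_\agentB\leadsto h$ whose meetings with \agentB{} are arranged by construction. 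Your proposed $\mathbb{P}'$ is morally this object, but without the $c_1\in R_\sigma$ observation you have no bound on the ``inter-wait paths'' to concatenate, and hence no way to build it.
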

\begin{proof}
Without loss of generality we assume \agentA{ }and \agentB{ }start inside the same cell $c_0$. After \agentB{ }leaves $c_0$ and arrives at $c_1$, \agentA{ }may start to move. We argue that $c_1$ is among the cells explored in the first iteration of \agentA{}'s exploring cycle $\sigma$. Wlg, suppose $c_1$ is the east neighbor of $c_0$. If $c_1$ is not inside the first rectangle of the half-band $B_\sigma$, then $B_\sigma$ is to the west of $c_1$ because it must include $c_0$, hence the continuation point is to the west of $c_1$. Therefore \agentA{}'s half-band exploration is to the west of $c_1$. Since there is only two agents, \agentA{ }would continue exploring to the west of $c_1$ and it will never return to $c_1$, hence \agentA{ }and \agentB{ }will not meet. Therefore $c_1$ is among the cells explored in the first iteration of \agentA{}. 

The previous observation and Corollary~\ref{ta_tbCoro} implies that when there is only two agents, \agentA{ }follows \agentB{} but \agentA{ }may explore a finite number of \emph{extra cells}. The same behaviour can be achieved by \emph{gluing} together the number of rectangles $B_\sigma$ it takes $\sigma$ to reach $c_k$. Suppose $l$ rectangles are required for \agentA{ }to reach $c_k$ from $c_0$, then  we can construct \agentAp{ }such that its automaton has the overall structure \agentAp{}:$I \leadsto \sigma^l\omega_\mathcal{B}\leadsto h$ where the exit transition of $\omega_\mathcal{B}$ leads to the first state of $\sigma^l$. By the construction of \agentAp{ }, the agents \agentAp{ }and \agentB{ } explore using arranged meetings only. \qed
\end{proof}


\begin{lemma}
\label{exp_arranged}
The exploration power of two agents with arranged meetings is a half-band.
\end{lemma}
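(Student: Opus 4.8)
The easy direction is the lower bound: by Corollary~\ref{exp_pow_1ant} a single agent already explores a half-band, so two agents certainly can (let one of them idle). The content of the lemma is therefore the upper bound, and the plan is to show that any arranged-meeting protocol explores \emph{at most} a half-band. The key leverage is that in such a protocol the cell, the meeting state-pair, and the transition taken at every meeting are all fixed at design time; the scheduler can only lengthen the intervals an agent spends waiting. Since a waiting agent stays in a bounded region with zero net displacement, the explored region and the displacement accumulated between meetings do not depend on the schedule, and the joint motion is deterministic.

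First I would bound a single inter-meeting segment. Between two consecutive meetings each agent either waits or moves obliviously. An oblivious chain $\Theta$ is, by our standing assumption, not a cycle; as the automaton is deterministic and oblivious transitions ignore sensing, the oblivious path leaving any state must reach a waiting or meeting-expecting state (or halt) within $|Q|$ steps, so it has bounded length and bounded displacement. Waiting contributes a bounded explored set and zero net displacement. Hence the cells explored and the displacement accumulated between any two consecutive meetings are bounded by a constant depending only on $|Q|$.

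Next I would establish eventual periodicity. The \emph{meeting configuration}, namely the pair of states of \agentA{} and \agentB{} at the instant they meet, takes one of at most $|Q_\agentA|\cdot|Q_\agentB|$ values, and its successor is completely determined by the current one. Therefore the infinite sequence of meeting configurations is eventually periodic, with some period $\pi$. Every meeting places both agents in one common cell, so at each meeting their relative position is $(0,0)$; by the translation invariance of the automata the net displacement over one period is a fixed vector $\vec{d}$, the same wherever the period begins. If $\vec{d}=0$ both agents remain in a bounded region and the claim is trivial, so assume $\vec{d}\neq 0$.

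Finally I would assemble the half-band exactly as in the one-agent case. In the $m$-th period both agents stay inside a fixed bounded rectangle $R$ translated by $m\vec{d}$, so the whole explored region is $\bigcup_{m\ge 0}(R+m\vec{d})$ together with a bounded prefix, i.e.\ a half-band of direction $\vec{d}$ whose width and height are read off from $R$ and whose continuation is $\vec{d}$; sharing a cell once per period keeps \agentA{} and \agentB{} within bounded distance, so both of their contributions lie inside this single half-band. I expect the periodicity step to be the main obstacle: one must argue carefully that an adversary stalling an agent inside a waiting cycle can alter neither the sequence of meeting configurations nor the per-period displacement, so that the pigeonhole is applied to the finitely many meeting configurations rather than to the unbounded global clock, and the resulting half-band is genuinely schedule-resistant.
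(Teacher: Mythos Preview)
Your argument is correct but takes a genuinely different route from the paper. The paper proceeds by \emph{reduction to a single agent}: it strips the waiting cycles out of each agent's exploring part $E_x$ to obtain finite movement chains, concatenates them into $M_\agentC = M_\agentA \odot M_\agentB$, and observes that a single agent \agentC{} following $M_\agentC$ as its exploration cycle covers everything \agentA{} and \agentB{} cover between consecutive meetings; Corollary~\ref{exp_pow_1ant} then delivers the half-band bound immediately. You instead analyse the joint dynamics directly: pigeonhole on the finite set of meeting state-pairs gives eventual periodicity, translation invariance gives a fixed per-period displacement $\vec d$, and the half-band is assembled as $\bigcup_{m\ge 0}(R+m\vec d)$. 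The paper's reduction is shorter and reuses the one-agent machinery wholesale, at the cost of leaving the concatenation bookkeeping (why $M_\agentA$ and $M_\agentB$ can be spliced at a common cell) somewhat implicit. Your approach makes the geometric parameters and the schedule-independence explicit---in particular the clean observation that waiting contributes zero net displacement, so the adversary cannot alter the sequence of meeting configurations---and would generalise more readily if one wanted quantitative bounds on width or continuation. Both arguments rely on the same underlying fact (finitely many joint states at meeting instants force periodicity); they just cut the problem along different seams.
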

\begin{proof}
Given a protocol $\mathbb{M}$ for \agentA{ }and \agentB{ }that uses arranged meetings, we can construct a protocol $\mathbb{O}$ such that it uses only one agent, \agentC{},  and bounds $\mathbb{M}$. Recall that arranged meetings require the agents to move a prefixed amount of cells and then wait. For the sake of our construction, we assume the agents start at the same cell $c_0$. Let the agents have the overall structure $I_x \leadsto E_x \leadsto h_x$ where $x \in \{\agentA, \agentB\}$; $I_x$ is a trajectory from the initial state to the first state of the \emph{exploring} part of the automaton $E_x$, and $h_x$ is the halting state. Please note that $E_x$ is not composed of a single cycle, because at least a waiting cycle must exist in $E_x$. However the overall structure of $E_x$ must be $E_x: M_{x_1}\omega_{-x}\ldots M_{x_i}\omega_{-x}\ldots M_{x_n}\omega_{-x}M'_{x}$ where $M_{x_i}, i \geq 1$ are the prefixed movements for the next meeting, $\omega_{-x}$ is the waiting cycle and $M'_{x}$ is a chain (possibly empty) that leads back to $M_{x_1}$. Please note that at the end of each $M_{x_i}$ \agentA{ }and \agentB{ }meet.

Since we only have those two agents we can construct the chain $M_x = \bigodot_{i=1}^n M_{x_i}\odot M'_x$ where $\bigodot_{i=1}^n$ is a short hand of $M_{x_1}\odot\ldots \odot M_{x_i}\odot\ldots\odot M_{x_n}$. The chain $M_\agentC = M_\agentA\odot M_\agentB$ covers the area explored by \agentA{ }and \agentB{ }respectively and ends at the last meeting point. Therefore the area covered by \agentA{ }and \agentB{ }is covered by \agentC{ }alone. By Lemma~\ref{exp_pow_1ant} $\mathbb{M}$ is bounded $\mathbb{O}$. However if the agents explore in two different directions, the area of the integer plane they would cover is greater than just a half-band. \qed
\end{proof}

\begin{corollary}
\label{exp_pow_2ant}
The exploration power of two agents is two independent half-bands.
\end{corollary}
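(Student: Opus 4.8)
The plan is to prove the two matching inclusions by a case analysis on whether \agentA{ }and \agentB{ }\emph{interact}, recalling that interaction means meeting in an unbounded number of cells and that every meeting is of exactly one of the three types (arranged, discovered, or scheduled). The lower bound, that two independent half-bands are attainable, is immediate from Corollary~\ref{exp_pow_1ant}: take two agents that never interact and orient their exploration cycles in independently chosen directions; each covers a half-band and the union is two independent half-bands. The substance of the statement is therefore the matching upper bound, namely that \emph{no} effective two-agent protocol can cover strictly more than the union of two half-bands.

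For the non-interacting case I would argue that if the agents meet in only finitely many cells, then after the last such meeting each agent runs in isolation. Any arbitrarily long motion of a lone agent is an exploration cycle (Subsection~\ref{lab:OneAnt}), so by Corollary~\ref{1exp_pow} each agent covers at most a half-band, and together they cover at most two independent half-bands. The finitely many cells visited before the last meeting contribute only a bounded region, which is absorbed into those half-bands and does not enlarge the shape.

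For the interacting case I would dispatch the three meeting types using the lemmas already established. If every meeting is arranged, Lemma~\ref{exp_arranged} collapses the combined area of \agentA{ }and \agentB{ }to a single half-band. If any scheduled meetings occur, Lemma~\ref{2antsSchedMeet} reduces the protocol to one using only arranged meetings, and Lemma~\ref{exp_arranged} again yields a single half-band. If the meetings are discovered, then schedule resistance forces the two half-bands to be of equal size and in opposite directions, exactly as argued when discovered meetings were introduced; their union is then two opposite half-bands. In each sub-case the explored area is contained in two half-bands, and the bound follows.

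The main obstacle I anticipate is ruling out protocols that \emph{mix} meeting types, or that interleave bouts of interaction with bouts of solo exploration, where one must be certain that no such combination manufactures a third half-band or a qualitatively larger region. The key observation that makes the argument go through is that every bout of arbitrarily long travel -- whether before, between, or after meetings -- is itself an exploration cycle and hence contributes only a half-band; since only two agents are ever available, at most two distinct half-bands can be opened, and the preceding lemmas show that interaction never increases this count and usually decreases it to one. Turning this into an airtight accounting over an arbitrary alternation of phases, rather than treating each meeting type in isolation, is the delicate step, and I would phrase it as an invariant asserting that the total explored region is at all times contained in at most two half-bands.
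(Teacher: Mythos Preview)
The paper states this corollary without proof; it is meant to follow immediately from Lemma~\ref{2antsSchedMeet} and Lemma~\ref{exp_arranged} (interaction of two agents collapses to a single half-band) together with Corollary~\ref{exp_pow_1ant} applied to each agent separately (no interaction yields one half-band per agent), which is exactly the interacting/non-interacting case split you carry out. Your argument is the intended one, only spelled out more carefully than the paper itself does---including the discovered-meeting and mixed-type subcases that the paper leaves entirely implicit.
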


\begin{corollary}
\label{exp_pow_3ant_2inter}
Given three agents, if only two interact, then the exploration power is two independent half-bands.
\end{corollary}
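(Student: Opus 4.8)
The plan is to show that forcing the two interacting agents to meet infinitely often collapses their joint exploration onto a single half-band, so that, together with the independent half-band of the third agent, the exploration power is exactly two independent half-bands; this is the natural analogue of Corollary~\ref{3exp_pow_1}, with one interacting pair merging two of the three half-bands into one. Label the agents \agentA, \agentB, \agentC{} and, without loss of generality, let \agentA{} and \agentB{} be the interacting pair while \agentC{} interacts with neither.

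First I would confine the pair \agentA, \agentB{} to a single half-band. Since they interact they meet in an unbounded number of cells, so their meetings cannot be a single discovered encounter; the infinitely many meetings must come from a repeating (cyclic) meeting pattern. By Lemma~\ref{2antsSchedMeet} any scheduled interaction is bounded by an arranged one, and by Lemma~\ref{exp_arranged} an arranged pair is bounded by the single agent whose motion is the glued chain $M_\agentA \odot M_\agentB$. Because the pattern recurs, this chain closes into a single exploration cycle, so by Corollary~\ref{1exp_pow} it sweeps at most one half-band; in the scheduled case Corollary~\ref{ta_tbCoro} makes this explicit, since $T_\agentB \subseteq T_\agentA$ shows the bounded agent contributes no new cells beyond \agentA{}'s half-band. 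Hence \agentA{} and \agentB{} together are confined to one half-band.

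Second, because \agentC{} does not interact with \agentA{} or \agentB{}, it meets them in only finitely many cells, so those meetings affect only a bounded prefix of its run and \agentC{} is eventually alone; by Corollary~\ref{exp_pow_1ant} it sweeps at most one half-band. The absence of any unbounded coupling between \agentC{} and the pair makes the two half-bands independent. Summing the two contributions bounds the explored region by two independent half-bands, and the bound is attained by letting the pair sweep one half-band while \agentC{} sweeps a disjoint second; the corollary follows.

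The main obstacle is the first step, namely neutralising the caveat of Lemma~\ref{exp_arranged} that two agents heading in different directions can cover more than a single half-band. The key point is that this caveat applies only to agents meeting finitely often: to meet in an \emph{unbounded} number of cells the pair must share a common continuation (drift) direction, since otherwise their separation grows without bound and they can meet at most finitely many times. Once a common direction is forced, the glued motion is a genuine exploration cycle and Corollary~\ref{1exp_pow} pins the pair to one half-band, which is exactly what keeps the total from climbing to three.
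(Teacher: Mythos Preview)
The paper states this corollary without proof, treating it as an immediate consequence of Lemma~\ref{exp_arranged} together with Corollary~\ref{exp_pow_1ant}; indeed, in the proof of the main theorem it simply writes that ``Corollary~\ref{exp_arranged} implies the exploration area is two half-bands when two agents interact and one does not with either.'' Your argument fills in exactly this implied reasoning: the interacting pair collapses to one half-band via Lemma~\ref{2antsSchedMeet} and Lemma~\ref{exp_arranged}, the lone agent contributes a second half-band by Corollary~\ref{exp_pow_1ant}, and the absence of unbounded coupling makes them independent.

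Your third paragraph is a genuine addition. The paper never explicitly neutralises the closing caveat of Lemma~\ref{exp_arranged} (that two agents heading in different directions may cover more than a single half-band), and your observation that an \emph{unbounded} set of meeting cells forces a common drift direction---so that the glued chain $M_\agentA \odot M_\agentB$ really does close into a single exploration cycle---is the right way to close that gap. This makes your write-up slightly more careful than the paper itself, while following the same route.
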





\subsection{Three ants}
\label{lab:three_ep}
In subsection~\ref{lab:twoAntsSec} we determined the exploration power of two agents, in particular, we determined (Lemma~\ref{2antsSchedMeet}) that two agents using scheduled meetings have the same exploration power than two agents using arranged meetings. Please recall the equivalence emerged from the fact that the unbounded agent was only able to explore a fixed number of cells before meeting back with the bounded agent. This limitation is a consequence of the inability of the unbounded agent to \emph{turn around} at an arbitrary point in time (Lemma~\ref{1exp_pow}). Moreover, in Corollary~\ref{3exp_pow_1} we characterized how much of the integer plane three agents can cover when they do not interact pairwise. In this subsection we will show that the two remaining interaction scenarios (one agent interacts with two and pairwise interaction) are the ones that define the exploration power for three agents; the main difference between the two is how the exploration of the wedge is divided between the agents.

With three agents, we can take advantage of the full exploration power of scheduled meetings. Two agents, \agentB{} and \agentC{} act as \emph{beacons} for an explorer (unbounded agent) \agentA{}. The beacons function is to mark where the unbounded agent turns around and heads in the opposite direction. In the terms we have defined, each meeting of \agentA{ }with \agentB{ }and \agentC{ }is an arranged meeting and \agentA{} interacts with \agentB{} and \agentC{} but these two do not interact. Another possible exploration is when the agents alternate in changing roles from beacon to explorer. When an explorer meets a beacon, it becomes a beacon and the later an explorer in this type of interaction the agents interact pairwise.

If the beacons move in opposite directions (e.g. \agentB{ }moves always west and \agentC{ }always moves east) then the exploration area will be two opposite half-bands. However, if the beacons move in half-bands such that their respective directions have an angle less than $\pi$, the explorer will visit an unbounded number of cells that neither of the beacons will explore. For example, if the beacon \agentB{ } moves to the north and then to the west and waits for \agentA{ }and the beacon \agentC{ }moves to the north and then to the east and waits for \agentA{ }, then \agentB{ }and \agentC{ }would explore the sets $\{(-i,i),(-i,i+1) | i \in \mathbb{N}\},\{(i,i),(i,i+1) | i \in \mathbb{N}\}$ respectively; while \agentA{ }would explore the set $\{(x,i) | i \in \mathbb{N}, -i \leq x \leq i\}$. 

We call the shape resulting from a protocol such that the explorer visits cells the beacons do not a \emph{wedge}. See Figures~\ref{fig:wedge_1},\ref{fig:wedge_2}.

\begin{figure}[h]
	\centering
	\includegraphics[width=0.9\textwidth]{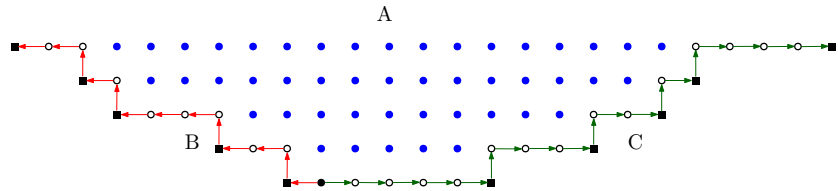}
	\caption{A wedge with the agents starting at the solid black dot. The bounded agents move following the black markers. Each bounded agent waits for the unbounded one at the square cells. The blue dots are the cells explored only by the unbounded agent.}
	\label{fig:wedge_1}
\end{figure}

\begin{figure}[h]
	\centering
	\includegraphics[width=0.4\textwidth]{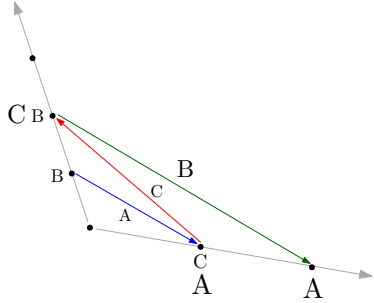}
	\caption{A wedge where the agents exchange roles. The gray vectors are the beacons' directions. \agentA{} takes the role of explorer first, then it changes to beacon and \agentC{} becomes an explorer. When \agentC{} meets \agentB{} it becomes a beacon again and \agentB{} becomes an explorer.}
	\label{fig:wedge_2}
\end{figure}

\begin{theorem}
The exploration power of three agents its a wedge.
\end{theorem}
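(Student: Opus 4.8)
The plan is to prove the two halves of the statement separately: that some three-agent protocol sweeps a full wedge (achievability), and that no three-agent protocol can explore strictly more than a wedge (the upper bound). Achievability I would settle by the explicit \emph{one explorer, two beacons} construction described above: when \agentB{} and \agentC{} advance along half-bands whose directions $\vec{d}_\agentB$ and $\vec{d}_\agentC$ subtend an angle strictly less than $\pi$ and \agentA{} shuttles between them through scheduled meetings, the cells where \agentA{} turns around trace the two sides of a wedge with apex at the common start cell $o$, while \agentA{} fills its interior. Hence a wedge is realizable and the real content of the theorem is the upper bound.

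For the upper bound I would first remove the easy cases. By Corollary~\ref{exp_pow_3ant_2inter}, if at most two of the agents interact the explored region is already two independent half-bands, a degenerate wedge, so I may assume all three agents interact. Next I would show that the only mechanism able to uncover a genuinely unbounded \emph{new} region is a scheduled meeting: by Lemma~\ref{exp_arranged} a pair restricted to arranged meetings covers only a half-band, and a discovered meeting is schedule-resistant only when the two half-bands are opposite, again a degenerate wedge. Thus every extra cell is swept by an unbounded \emph{explorer} that, as in Lemma~\ref{2antsSchedMeet}, makes progress only between consecutive meetings with \emph{bounded} beacons, and by Corollary~\ref{ta_tbCoro} applied to the explorer and each of its beacons it must keep returning to every beacon it is anchored to.

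The geometric heart of the argument is the notion of a \emph{turnaround front}. Since an explorer inside an exploration cycle cannot reverse on its own (Corollary~\ref{1exp_pow}), every reversal is triggered by meeting a beacon; and because there are exactly three agents, at any instant the single explorer has only two beacons, so every turnaround cell lies on one of \emph{two} fronts. I would then argue that, for an effective protocol making unbounded progress, the automaton states recorded at the successive meetings on each front are eventually periodic, the agents being finite-state, so that each front advances by a fixed net displacement per period, say $\vec{d}_\agentB$ and $\vec{d}_\agentC$, even though the explorer's travel between the two fronts grows without bound. The two fronts therefore diverge along fixed rays issuing from $o$, and the explorer's sweep is trapped in the convex cone they bound; adding the beacon trajectories keeps everything inside that cone, which is exactly a wedge, unbounded in its interior precisely when the angle stays below $\pi$.

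The step I expect to be the main obstacle is the \emph{pairwise} scenario, in which the explorer role rotates among all three agents, so that the identity of the agent sitting on a given front keeps changing and each of the three pairs meets infinitely often. The crux is to show that this rotation still produces only \emph{two} fronts of fixed direction rather than three: at every instant there remain exactly two beacons, hence two turnaround fronts, and I would maintain the invariant that both fronts keep advancing in their fixed directions and that a freshly turned beacon always stops at a cell already inside the current wedge, so that no role swap can enlarge it. Making ``trapped between two beacons'' precise as a convex-containment invariant that survives every role change is the delicate part, and it is exactly this two-fronts bound that separates three agents, which reach only a wedge, from four, which reach the whole lattice. Once the invariant is in place, combining it with achievability shows that the exploration power of three agents is precisely a wedge.
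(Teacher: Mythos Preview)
Your high-level plan coincides with the paper's: the same explicit one-explorer/two-beacons construction for achievability, the same elimination of the low-interaction cases via Corollaries~\ref{3exp_pow_1} and~\ref{exp_pow_3ant_2inter}, the same reduction from arranged to scheduled meetings through Lemma~\ref{exp_arranged}, and the same final split into the fixed-roles case (one explorer, two permanent beacons) versus the role-rotation case.

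Where you diverge is the pairwise scenario. You propose tracking two ``turnaround fronts,'' arguing their directions are eventually periodic and hence fixed, and maintaining a convex-containment invariant across every role swap; you correctly flag this as the delicate step. The paper bypasses that machinery entirely: after a swap makes, say, \agentC{} the explorer heading toward beacon \agentB{}, it argues that if \agentC{} left the wedge it would have to go around it and then turn back to reach \agentB{}, but by Corollary~\ref{1exp_pow} a lone agent cannot turn around at an arbitrary point, so \agentC{} is forced to remain inside. That single appeal to the one-agent bound replaces your whole invariant argument. Your periodicity-and-fronts route is not wrong and would, if carried out, give a more explicit description of the wedge's sides; the paper's route is shorter but correspondingly informal about what ``the wedge'' already is at the moment of the swap.
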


\begin{proof}
In the previous paragraph we showed the existence of wedges now it remains to show that this type of shapes are the ones with the maximum area. By Corollary~\ref{3exp_pow_1} we know the maximum area three agents can explore in the absence of communication is three half-bands. Corollary~\ref{exp_arranged} implies the exploration area is two half-bands when two agents interact and one does not with either. If \agentA{ }and \agentB{} interact and \agentA{} and \agentC{} also interact but those interactions are arranged, then by Proposition~\ref{exp_arranged} the exploration area is a half-band. Hence the interactions use scheduled meetings. 

Two cases arise, the first one is when \agentA{} only interacts with \agentB{} and \agentC{}; the second when \agentB{} also interacts with \agentA{}. 

If \agentB{} and \agentC{} move in half-bands and \agentA{} bounce between them, If \agentB{}'s and \agentC{}'s directions are parallel (including opposite directions) then the exploration area is a half-band, however if their directions make an angle less than $\pi$ with each iteration \agentB{} and \agentC{} grow apart and the exploration area is a wedge. 
 
On the other hand if \agentB{} and \agentC{} also interact we must show they explore the same wedge. W.l.o.g suppose \agentB{} is a beacon and \agentA{} meets with \agentC{} inside a cell and they exchange places, i.e. \agentA{} becomes a beacon and \agentC{} becomes the explorer (see Figure~\ref{fig:wedge_2}). If \agentC{} does not explore the wedge, it must go around it, however, it requires \agentC{} to turn around at some point to meet \agentB{}; by Corollary~\ref{exp_pow_1ant} it can not do that, hence \agentC{} explores inside the wedge. \qed
\end{proof}

\subsection{Two beacons, one explorer}
\label{2beac}
In this subsection we characterize the type of half-bands three agents can explore. They are characterized based on the location (in terms of plane's quadrants) of the beacon's directions. If both vectors are located in adjacent quadrants (e.g. first and second) the explorer movements between beacons is only following one direction. In Figure~\ref{fig:acute_wedge} we show a wedge in which one direction is $(-6, 1)$ and the other is $(3,7)$. To guide the exploration, the red beacon stops at each cross and the blue beacon stops at each cell, per row, closest to the blue line. The explorer then moves W and E bouncing between the beacons. 

\begin{figure}[h]
	\centering
	\includegraphics[angle=90, width=0.6\textwidth]{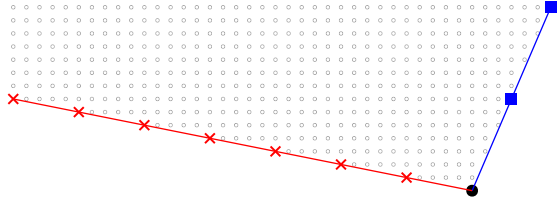}
	\caption{An example of a wedge in which the directions of the beacons are inside adjacent quadrants.}
	\label{fig:acute_wedge}
\end{figure}

In general if a direction is $(i,j)$ then the beacon must visit all the cells of the intersection between the wedge and the rectangle with sides of length $i, j$. 

\begin{figure}[h]
	\centering
	\subfloat[]{
		\includegraphics[width=0.4\textwidth]{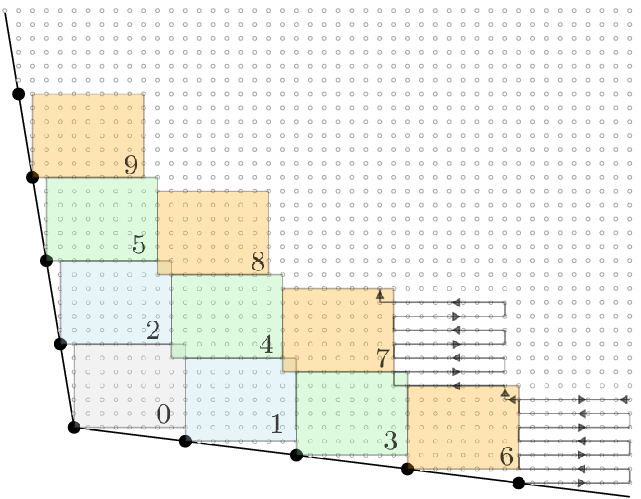}
	}
	\subfloat[]{
		\includegraphics[width=0.43\textwidth]{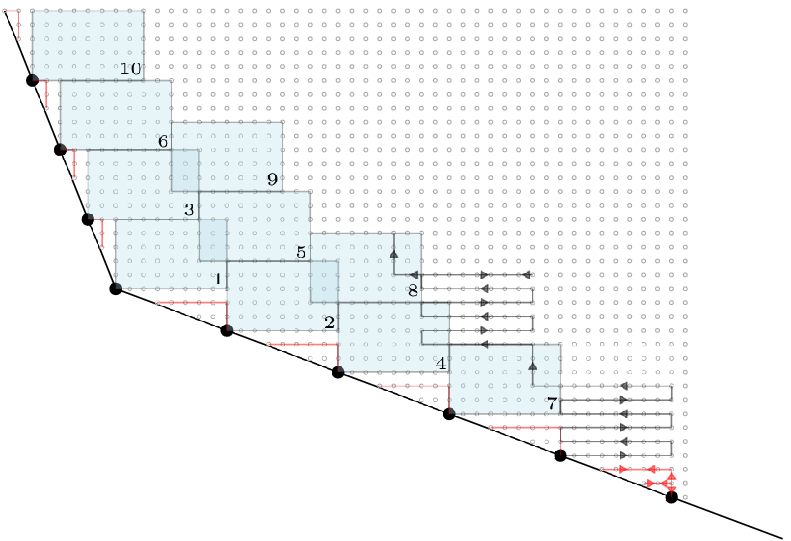}
		\label{fig:5by8}
	}
	\caption{Two examples of an exploration protocol when the directions are not in adjacent quadrants. The number on each rectangle is the order in which \agentA{} explore. The black trajectories are the rectangle exploration \agentA{} performs, the red ones is the extra cells \agentA{} explores when leading \agentB{} and \agentC{} to their next positions.}
	\label{fig:6by8}
\end{figure}

\begin{lemma}
Without loss of generality, let $\mathbf{d}_\agentC = (k, l)$, $\mathbf{d}_\agentB = (i,j)$ such that $i,l > 0$ and $j,k < 0$. There exists a protocol for \agentA{}, \agentB{} and \agentC{}, such that \agentA{} explores the wedge defined by \agentB{} and \agentC{}.
\end{lemma}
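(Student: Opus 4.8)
The plan is to construct an explicit protocol and then verify coverage, constant memory and schedule-resistance in turn. Put the common start cell at the origin $o$ and write $W$ for the set of lattice cells in the cone spanned by $\mathbf{d}_\agentB=(i,j)$ and $\mathbf{d}_\agentC=(k,l)$. Since these vectors lie in opposite (non-adjacent) quadrants, $W$ is an obtuse cone that opens along a diagonal; this is exactly the feature that forces \agentA{} to move along two axis directions rather than one (contrast the adjacent-quadrant situation of Figure~\ref{fig:acute_wedge}). The beacons will trace the two boundary rays of $W$: the $n$-th waiting cell of \agentB{} is $o+n\,\mathbf{d}_\agentB$, and on each step \agentB{} performs a fixed boustrophedon sweep of the $i\times\lvert j\rvert$ axis-aligned rectangle spanned by that step, intersected with $W$. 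This is the requirement, stated just before the lemma, that the beacon visit the whole intersection of $W$ with the $i\times\lvert j\rvert$ rectangle, and it is what closes the gaps along the staircase boundary. Beacon \agentC{} is symmetric, advancing by $\mathbf{d}_\agentC$ and sweeping the $\lvert k\rvert\times l$ rectangle. Each beacon is then a scheduled-meeting bounded agent of the kind analysed in Subsection~\ref{lab:twoAntsSec}: a fixed movement pattern followed by a waiting cycle $\omega_\agentA$; the agents being disjoint, \agentA{} can tell \agentB{} from \agentC{} by the sensed state.

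Next I would describe the explorer. \agentA{}'s exploration cycle $\sigma$ sweeps a single bounded rectangle (by a pattern like the one in Corollary~\ref{exp_pow_1ant}), and these rectangles tile $W$ in a diagonal staircase, the numbered rectangles of Figure~\ref{fig:6by8}; \agentA{} passes from one rectangle to the next by a fixed offset. The essential point is that \agentA{} never counts: it keeps laying down rectangles across a diagonal layer until it senses a beacon, which marks the current boundary of $W$ and triggers a turn into the next layer. Hence $\sigma$ carries, at every state, a transition labelled by the neighbouring beacon's state, which is precisely the design requirement that the unbounded agent \emph{expect a meeting at every state} so as not to overtake a beacon. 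After reaching a boundary rectangle, \agentA{} leads the corresponding beacon one step further along its ray (the red extra cells of Figure~\ref{fig:6by8}), so that the beacon sits at the boundary of the next layer before \agentA{} sweeps it.

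Then I would prove coverage and schedule-resistance. For coverage, every cell of $W$ lies in the diagonal band between the two rays at some height and hence inside one of the staircase rectangles; since the beacon waiting cells march off linearly along the rays, each such rectangle is reached after finitely many rounds, and the beacon rectangle-sweeps supply precisely the cells of $W$ adjacent to the two rays, so the union of everything \agentA{} and the beacons visit is exactly $W$. For schedule-resistance I would appeal to the scheduled-meeting results already established: because \agentA{} expects its beacon at every state, Lemma~\ref{last-known-pos} and Corollary~\ref{ta_tbCoro} give $T_\agentB\subseteq T_\agentA$ and $T_\agentC\subseteq T_\agentA$ and force \agentA{} to follow each beacon, so no schedule can make \agentA{} overtake a beacon or skip a meeting, and the meeting cells stay fixed at the beacon waiting cells. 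Constant memory is then immediate, since $\sigma$, the two beacon patterns and the inter-rectangle offsets are fixed patterns of bounded length and every turn is meeting-triggered rather than counted.

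The step I expect to be the main obstacle is the bookkeeping at the two diagonal boundaries. Because $\mathbf{d}_\agentB$ and $\mathbf{d}_\agentC$ are genuinely diagonal and point into opposite quadrants, the staircase approximations of the two rays advance at different slopes and do not line up with the rectangle grid \agentA{} uses, so the rectangle decomposition and the escort sub-paths must be chosen so that the boundary cells of $W$ are each covered exactly once, with no row of the cone skipped and no cell left outside every rectangle, while \agentA{}'s automaton stays finite. Verifying that this tiling has no leaks is the delicate combinatorial-geometric core of the argument; everything else, namely the waiting cycles and the guarantee that the meetings actually occur, is inherited wholesale from the two-agent scheduled-meeting analysis.
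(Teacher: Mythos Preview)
Your plan is reasonable and would work, but it is organised differently from the paper. The paper proceeds by \emph{double induction} on $|k|$ and $|j|$: the base case $k=j=-1$ is written out explicitly, with \agentC{} moving $N^{l}W$, \agentB{} moving $E^{i}S$, and \agentA{} sweeping $i\times l$ rectangles via patterns like $(E^{i}NW^{i}N)^{\lfloor l/2\rfloor}\cdots$; the inductive step (only hinted at) then handles larger $|k|,|j|$ by absorbing the extra boundary strips into the leading phase of \agentA{}, exactly the red paths of Figure~\ref{fig:5by8}. Your route is a direct one-shot construction: a fixed rectangle tile, a diagonal staircase, meeting-triggered turns, and an appeal to Lemma~\ref{last-known-pos} and Corollary~\ref{ta_tbCoro} for schedule-resistance. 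The induction buys concreteness (explicit finite automata at every stage) at the price of an inductive bookkeeping the paper does not actually carry out; your direct construction is cleaner conceptually but, as you yourself flag, leaves the boundary tiling as the hard unfinished step.

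One point to tighten: you have the beacons perform independent boustrophedon sweeps of their $i\times|j|$ and $|k|\times l$ rectangles \emph{and} claim these sweeps contribute cells to the coverage of $W$ that \agentA{} does not visit. That is in tension with your own invocation of Corollary~\ref{ta_tbCoro}: if $T_\agentB\subseteq T_\agentA$, then every cell \agentB{} sweeps is already visited by \agentA{}, so the beacon sweeps add nothing; conversely, if they do add cells, an adversary can park \agentB{} on one of them and break the meeting. In the paper's base case the beacon trajectories are simple monotone paths ($N^{l}W$ and $E^{i}S$) that are manifestly sub-paths of \agentA{}'s sweep, and the remark before the lemma about the beacon ``visiting the rectangle'' refers to providing a waiting cell on every row so that \agentA{} can meet it, not to independent exploration. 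Recasting your beacons that way removes the tension and lines your argument up with the paper's.
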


\begin{proof}

(Sketch). The proof is by double induction on the absolute value of $k$ and $j$. See Figure~\ref{fig:6by8} for an example of $j = -1, k = -1, i = 8$ and $k = 6$.

When $k = -1$ and $j = -1$ the rectangles \agentA{} explore are of size $i \times l$. \agentC{} movement pattern is $N^lW$. \agentB{} movement pattern is $E^iS$. 

After meeting \agentB{} (or at the start of the protocol), \agentA{} will move $(E^iNW^iN)^{\frac{l}{2}}S$ if $l$ is even, otherwise it will move $(E^iNW^iN)^{\floor*{\frac{l}{2}}}E^iW^i$. Please note that when \agentA{} finishes exploring this first rectangle \agentB{} is already at the next meeting position. Then \agentA{} will move once due west, if \agentA{} does not find \agentC{} it moves once $N$ and moves $(W^iNE^iN)^{\frac{l}{2}}SW^i$ if $l$ is even, otherwise it moves $(W^iNE^iN)^{\floor*{\frac{l}{2}}}W^i$; then moves once west and check if it finds \agentC{} waiting. \agentA{} meets \agentC{} because the path of the later is included in the path of the former.
\qed
\end{proof}

As we will see in the next subsection, it is not necessary that \agentA{} checks for \agentB{} and \agentC{}'s presence when it explores, because any protocol that uses scheduled meetings is equivalent to a protocol in which the unbounded agent \emph{leads} the bounded agent to the next position after a meeting. This allow us an easier analysis of the cases where $|k|, |j| > 1$ because the areas that can not be expressed in rectangles (red paths in Figure~\ref{fig:5by8}) can be explored while the unbounded agent leads the bounded one.

\subsection{A consequence}

Let $\mathbb{P}$ be a protocol such that agents \agentA{} (unbounded agent) and \agentB{} (bounded agent) interact only with scheduled meetings and also suppose that \agentB{} only interacts with \agentA{}. Lemma~\ref{ta_tbCoro} implies that agent \agentB{} can be replaced by a \emph{pebble} that \agentA{} can pick up and place. The difference between a \emph{pebble} and an agent is that the exploration power of the former is just one cell while by Proposition~\ref{1exp_pow} the exploration power of the later is a half-band.

\begin{lemma}
\label{pebble}
Let $\mathbb{P}$ be a protocol such that agents \agentA{} and \agentB{} use only scheduled meetings. Let \agentA{} be the unbounded agent and \agentB{} the bounded one. If \agentB{} does not interact with any other agent, then exists an equivalent protocol $\mathbb{O}$ such that \agentB{} can be replaced by a pebble that \agentA{} can pick up and place.
\end{lemma}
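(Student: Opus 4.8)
The plan is to establish the equivalence by showing that a pebble can faithfully simulate the bounded agent \agentB{} under any schedule, exploiting the fact that Corollary~\ref{ta_tbCoro} already guarantees \agentA{} follows \agentB{} and that \agentA{} is the one who revisits every cell of \agentB{}'s trajectory last. First I would recall the relevant structural facts: by Corollary~\ref{ta_tbCoro} we have $T_\agentB \subseteq T_\agentA$, and for every $c_i \in T_\agentB$ the last visit of \agentA{} to $c_i$ occurs after \agentB{}'s last visit. This means the entire information content of \agentB{}'s position over time, as far as meetings are concerned, is captured by a single \emph{marked} cell: the cell where \agentB{} currently waits. Since \agentB{} only interacts with \agentA{}, no other agent ever needs to sense \agentB{}'s state, so \agentB{} carries no information beyond the location of the meeting point it currently occupies.

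Next I would describe the construction of $\mathbb{O}$. The idea is to have \agentA{} carry out both the exploration and the transport of the marker. Concretely, I would augment \agentA{}'s automaton with the two operations $\pick$ and $\drop$: when \agentA{} meets the waiting \agentB{} at cell $c_i$, instead of triggering the scheduled-meeting exchange, \agentA{} executes $\pick$, absorbing the marker; \agentA{} then follows the (finite) trajectory segment that \agentB{} would have traversed to reach the next meeting cell $c_{i+1}$, and at $c_{i+1}$ executes $\drop$, leaving the pebble there. Because by Lemma~\ref{last-known-pos} the cell $c_{i+1}$ is always visited by \agentA{} at least once, this transport is always realizable within \agentA{}'s own movement pattern; and because the cells \agentB{} would have traversed form a bounded trajectory between consecutive meeting points, this extra detour adds only finitely many cells to \agentA{}'s itinerary per iteration, exactly as in the \emph{gluing} argument of Lemma~\ref{2antsSchedMeet}.

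I would then argue equivalence of the two protocols with respect to the discovered region and schedule-resistance. Since the pebble is carried and placed by \agentA{} itself, there is no longer an independent schedule acting on \agentB{}: the adversary can only pause or activate \agentA{}, and the marker moves only when \agentA{} moves it. This collapses the two-agent scheduling problem to a single-agent one, so every meeting that $\mathbb{P}$ relied upon becomes a deterministic self-coordination step in $\mathbb{O}$, trivially schedule-resistant. The set of cells explored is unchanged up to the finite per-iteration detours, which do not affect whether an unbounded region is covered; hence $\mathbb{O}$ is effective whenever $\mathbb{P}$ is, and explores the same shape.

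The main obstacle I expect is the verification that the $\pick$/$\drop$ mechanism genuinely preserves \emph{all} the meeting semantics under \emph{every} schedule, rather than just the benign ones. In particular one must check that nothing in $\mathbb{P}$ used \agentB{}'s ability to \emph{wait} as an independently schedulable delay that \agentA{} cannot reproduce alone — but this is precisely what Corollary~\ref{ta_tbCoro} rules out, since \agentA{} already dominates \agentB{}'s visits. The remaining care is bookkeeping: showing the transport segment fits inside \agentA{}'s finite-state automaton (only boundedly many marker-carrying states are needed, one per position along the longest inter-meeting trajectory, which is finite by the half-band structure from Corollary~\ref{1exp_pow}), so that the resulting \agentA{} is still a constant-memory agent. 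Once constant memory is confirmed, the equivalence follows.
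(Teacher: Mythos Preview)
Your proposal is correct and follows essentially the same approach as the paper: have \agentA{} pick up the marker at the meeting cell, carry it along \agentB{}'s finite inter-meeting trajectory to the next meeting cell, drop it there, and resume the original exploration, with Corollary~\ref{ta_tbCoro} guaranteeing that no meeting semantics are lost. The paper's proof differs only in presentation: it inserts an explicit intermediate protocol $\mathbb{P}'$ in which \agentA{} first walks forward to the next meeting cell, waits for \agentB{}, and then walks \emph{back} (via reversed-direction states $q_B^m\ldots q_B^0$) before executing $L_\agentA$, thereby making precise the ``detour'' you invoke and cleanly removing every expect-\agentB{} transition except the final waiting state before substituting $\pick$/$\drop$.
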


\begin{proof}
To prove the lemma we first construct a protocol $\mathbb{P}'$ in which we remove the need for \agentA{} to expect to meet \agentB{} in any other state but the waiting state. Then we construct $\mathbb{O}$ from $\mathbb{P}'$. In this proof we also assume that there are more than two agents because by Lemma~\ref{2antsSchedMeet} scheduled meetings for only two agents have the same exploration power than one agent.

Let $c_i$ be a cell in which \agentA{} and \agentB{} meet, and $c_k$ the cell of their next meeting. Let $D = (h_1,\ldots,h_m), h_l \in \Delta_\setminus P$ be the directions \agentB{} takes from $c_i$ to $c_k$. Under $\mathbb{P}$ \agentA{} waits for \agentB{} to leave $c_i$ (by means of a waiting cycle $\overline{\omega_\agentB}$) and then follows a chain $L_\agentA$ until it meets back with \agentB{}. Let $L_i^k = q_F^{0} \ldots q_F^{m}\omega_\agentB q_B^{m} \ldots q_B^{0}$ be a chain such that each $q_F^{l} q_F^{l+1}$ transition outputs the direction $h_{l+1}$, $\exit{{\omega_\agentB}} = \omega_\agentA$ and $q_B^{l} q_B^{l+1}$ outputs the direction $-h_{l+1}$. Let $Q'_\agentA = Q_\agentA \cup \{q_F^0, \ldots, q_F^m, q_B^0,\ldots, q_B^m\}$, $\delta'(a,b) = \delta(a,b)\, \forall a \in Q_\agentA, b \in \Sigma_\agentA$, $\delta'(q_F^{l}, \emptyset) = (q_F^{l+1},h_{l+1})$, and for each $\sigma \in \Sigma_\agentA \setminus \{\emptyset, \tau\}, \delta'(q_F^{l}, \sigma) = (q_F^{l}, P)$. Finally $\delta'(q_F^{l}, \tau) = h$ where $h$ is the halting state. Analogously for the states $q_B^l$. Let $\mathbb{P}'$ be the protocol obtained by replacing \agentA{} by \agentAp{} such that $\agentAp = \langle Q'_\agentA, \Sigma_\agentA, \Delta, \delta', q_\agentA^0 \rangle$.

By construction of \agentAp{} after \agentB{} leaves $c_i$, \agentAp{} starts to move towards $c_k$ following $L_i^k$. Once it reaches $c_k$ it enters in a waiting state for \agentB{} ($\omega_\agentB$) and its exit condition is the waiting state of \agentB{} ($\exit{{\omega_\agentB}} = \omega_\agentA$). When \agentAp{} exits its waiting state, \agentB{} is already waiting for \agentAp{} at $c_k$. Hence when \agentAp{} follows $L_\agentA$ there will be no need to expect a meet with \agentB{} in any of it states but the waiting state. The chain $q_B^m \ldots q_B^0$ leads \agentAp{} from $c_k$ back to $c_i$.

If we enrich \agentAp{} with a pebble and the capability to sense it, pick it up and drop it at a cell, we can replace $\overline{\omega_\agentB}$ by $\pick$ and $\omega_\agentB$ by $\drop$. To sense the pebble we add the $\rho$ symbol to $\Sigma_\agentAp$, it is not necessary to add the symbol to the other agents as they may remain oblivious to the pebble's presence. \qed


\end{proof}

Lemma~\ref{pebble} implies the 4 agents protocol in \cite{emek2015many} and our wedge construction where \agentB{} and \agentC{} do not interact and \agentA{} interact with both can be replaced by just one agent and 3 pebbles in the former and one agent and 2 pebbles in the later.

The idea of the previous construction is also valid when $n$ unbounded agents have scheduled meetings with the same bounded agent even if the meetings are at $n$ different locations. The key idea is to set an unbounded agent to act as a \emph{leader} who is responsible to pick up and to drop the pebble. 

Let $\agentA_1, \ldots, \agentA_n$ be $n$ unbounded agents and \agentB{} a bounded agent such that $\agentA_i$ has scheduled meetings with \agentB{}. W.l.o.g. assume $\agentA_i$ meets with \agentB{} before, in time, than $\agentA_{i+1}$. Also asume that $\agentA_1, \ldots, \agentA_n, \agentB$ start at the same cell $c$. $\agentA_1$ is the first agent to have a meet with \agentB{} at cell $c_1^0$ and $\agentA_n$ is the last one to have a meet with \agentB{} at cell $c_n^0$. We argue that the path \agentB{} follows from $c$ to $c_n^0$ is a subpath common to all the unbounded agents. By Lemma~\ref{ta_tbCoro} the cells from $c_i^0$ to $c_i^1$ that \agentB{} visits are also visited by $\agentA_i$ hence the cells from $c_i^k$ to $c_i^{k+1}$ that \agentB{} visits are also visited by all agents $\agentA_j, 1 \leq j \leq n$. If we set $\agentA_1$ as leader, then $\agentA_1$ would move from $c$ to $c_1^0$ and drop the pebble, come back to $c$ and wait for all agents to leave $c$; then move until it reaches $c_1^0$ for the last time and wait for all agents to arrive. Then pick up the pebble and move to $c_2^0$ drop it and go back to $c_1^0$ and so on.

\section{Grid explorations (as conclusion)}
\label{conc}
As a matter of conclusion, we note that when wedges with directions belonging to no adjacent quadrants of the integer plane are considered, then \emph{one explorer, three beacons} type of algorithms only need another agent who's direction will be in a quadrant adjacent to the other two. By the remark at the beginning of Subsection~\ref{2beac}, the movement of the unbounded agent seems to be easy. For example with only an agent \agentD{} that moves $W$ and $S$ and then enters in a waiting state, any wedge with directions $\mathbf{d}_\agentB = (i,-1)$ and $\mathbf{d}_\agentC = (-1,l)$ serves to explore $\mathbb{Z}^2$. However, additional care must be taken for wedges such that $j,k < -1$.

\bibliographystyle{splncs}
\bibliography{biblio.bib}{}

\end{document}